\begin{document}

\title{\textbf{Many-Body Resonances in the Avalanche Instability of Many-Body Localization}}

\author{Hyunsoo Ha}
\affiliation{Department of Physics, Princeton University, Princeton, New Jersey 08544, USA}

\author{Alan Morningstar}
\affiliation{Department of Physics, Princeton University, Princeton, New Jersey 08544, USA}
\affiliation{Department of Physics, Stanford University, Stanford, California 94305, USA}

\author{David A. Huse}
\affiliation{Department of Physics, Princeton University, Princeton, New Jersey 08544, USA}

\date{\today}

\begin{abstract}
Many-body localized (MBL) systems fail to reach thermal equilibrium under their own dynamics, even though they are interacting, nonintegrable, and in an extensively excited state.  One instability toward thermalization of MBL systems is the so-called ``avalanche,'' where a locally thermalizing rare region is able to spread thermalization through the full system. 
The spreading of the avalanche may be modeled and numerically studied in finite one-dimensional MBL systems by weakly coupling an infinite-temperature bath to one end of the system.  We find that the avalanche spreads primarily via strong many-body resonances between rare near-resonant eigenstates of the closed system.  Thus we find and explore a detailed connection between many-body resonances and avalanches in MBL systems. 
\end{abstract}

\maketitle

\emph{\textbf{Introduction---}}
Many-body localized (MBL) systems are a class of isolated many-body quantum systems that fail to thermalize due to their own unitary dynamics, even though they are interacting, nonintegrable and extensively excited~\cite{anderson_mbl,basko_altshuler_mbl,oganesyan_Huse_mbl,nandkishore_huse_mbl,deroeck_imbrie_mbl_review,alet_laflorencie_mbl_review,abanin_maksym_mbl_colloquium}.
This happens for one-dimensional systems with short-range interactions in the presence of strong enough quenched randomness, which yields a thermal-to-MBL phase transition of the dynamics.  In the MBL phase, there are an extensive number of emergent localized conserved operators \cite{serbyn_abanin_LIOM,huse_Oganesyan_lbit,chandran_abanin_liom_construction,Imbrie-Scardicchio2017_liom_review}.

One instability of the MBL phase that is believed to play a central role in the asymptotic, long-time, infinite-system MBL phase transition is the ``avalanche'' \cite{deroeck_huveneers_avalanche,thiery_deroeck_avalanche,thiery_deroeck_avalanche2,morningstar_huse_imbrie_RGcritical}.
Rare locally thermalizing regions necessarily exist due to the randomness. 
Starting from such a rare region, this ``thermal bubble" spreads through the adjacent typical MBL regions until the relaxation rate of the adjacent spins becomes smaller than the many-body level spacing of the thermal bubble, in which case the avalanche halts.  If the strength of the randomness is insufficient, the relaxation rate remains larger than the level spacing and the avalanche does not stop: the full system then slowly thermalizes and is no longer in the MBL phase (although it is in a prethermal-MBL regime~\cite{alan_huse_avalanche,long_anushya_jacobi_resonance}).

\begin{figure}
    \centering
	\includegraphics[width=0.5\textwidth]{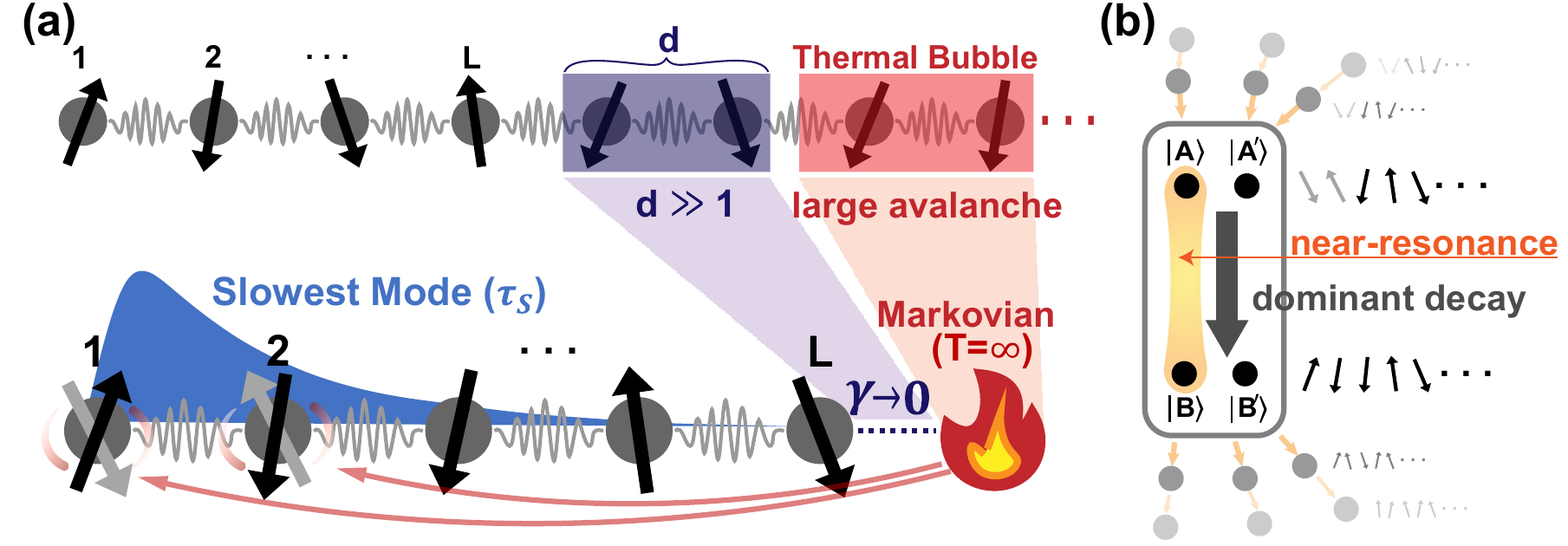}
	\caption{(a) The avalanche model. We model the large avalanche (seeded by a bare thermal rare region) spreading, but still far away ($d\gg 1$) by connecting the Markovian infinite-temperature bath in the weak-coupling limit with the one-dimensional MBL system of length $L$ spins. Specifically, we analyze the decay of the slowest mode ($\hat{\tau}_{S}$), which is localized near the end of the system farthest from the bath; $\hat{\tau}_{S}$ is a ``localized integral of motion''  in the MBL phase. (b) Schematic decay of $\hat{\tau}_{S}$.  A large fraction of the probability current in the decay of $\hat{\tau}_{S}$ passes through four eigenstates associated with a rare near-resonance.}
 \label{fig:cartoon}
\end{figure}

The avalanche has been numerically simulated in small systems \cite{luitz_deroeck_avalanche_modelling,goihl_Krumnow_avalanche_modelling,crowley_chandran_avalanche_modelling,alan_huse_avalanche,sels_avalanche,Tu-Sarma2022_avalanche} and experimentally probed~\cite{leonard_greiner_avalanche_exp}.
Recent work shows that the instability of MBL to avalanches occurs at much stronger randomness than had been previously thought \cite{alan_huse_avalanche,sels_avalanche}.  This leaves a large intermediate prethermal-MBL regime in the phase diagram between the onset of MBL-like behavior in small samples (or correspondingly short times) and the asymptotic MBL phase transition.  Clear numerical evidence has been obtained for many-body resonances being an important part of the physics in the near-thermal part of this regime~\cite{alan_huse_avalanche,gopalakrishnan_huse_resonance,crowley_chandran_resonance,khemani_huse_criticalmbl,khemani_huse_quasimbl,geraedts_regnault_mbl_entanglementspectrum,long_anushya_jacobi_resonance,villalonga_clark_mblresonance,garratt_chalker_mblresonance}, while no such evidence for the expected thermalizing rare regions has been found yet.  In the part of this intermediate prethermal regime that is farther from the thermal regime, it remains unclear what the primary mechanism that leads to thermalization is for samples larger than those that can be diagonalized.

In this work, we explore how an avalanche spreads through typical MBL regions for systems that are near the avalanche instability. In particular, we uncover explicit connections between many-body resonances and avalanches. We do not simulate the rare region that initiates the avalanche.  Instead, we assume a large avalanche is spreading, and that we may model that as an infinite-temperature bath (see Fig.~\ref{fig:cartoon}) weakly coupled to one end of our MBL spin chain~\cite{alan_huse_avalanche,sels_avalanche} (similar settings were also considered for studying transport~\cite{Znidaric_mbltransport1,Znidaric_mbltransport2,Znidaric_mbltransport3,fischer_altman_mblbath}). We find that particular many-body near-resonances of the closed system play a key role in facilitating the avalanche.  These near-resonances are the dominant process by which the bath at one end of the chain thermalizes the other end of the chain and thus propagates the avalanche.

\emph{\textbf{Model---}}
Our model consists of a chain of $L$ spin-1/2 degrees of freedom.  The dynamics of the closed system is given by the random-circuit Floquet MBL model of Ref.~\cite{alan_huse_avalanche}, which has unitary Floquet operator $\hat{U}_F$.  The disorder strength in this model is given by the parameter $\alpha$, with localization occurring at large $\alpha$ [see the Supplemental Material (SM)~\cite{supp_mat} for model details].  To investigate avalanche spreading, we weakly connect an infinite-temperature Markovian bath to spin $L$ at the right end of the system~\cite{alan_huse_avalanche,sels_avalanche}.  The quantum state of this open system is the density matrix $\hat{\rho}(t)$.

In our open-system Floquet model, the bath is represented by the superoperator $S_\mathrm{bath}$ that acts once each time period:
\begin{align}
\label{eq:S_bath}
S_{\mathrm{bath}}[\hat{\rho}] = \frac{\hat{\rho}}{1+3\gamma} + \frac{\gamma}{1+3\gamma}\sum_{j=1}^3 \hat{E}_j \hat{\rho} \hat{E}_j^\dagger~,
\end{align}
where $(\hat{E}_1,\hat{E}_2,\hat{E}_3) = (\hat{X}_L, \hat{Y}_L, \hat{Z}_L)$ are the jump operators acting on the last spin at site $L$. We will take the weak-coupling limit $\gamma \rightarrow 0$, representing the exponentially weak coupling of spin $L$ to the distant thermal rare region. The open-system Floquet superoperator $S_{\mathrm{period}}$ that takes our system through one time period is $S_{\mathrm{period}}[\hat{\rho}(t)]=S_{\mathrm{bath}}[\hat{U}_F \hat{\rho}(t) \hat{U}_F^\dagger]$.
The time evolution of the system's state is given by \begin{align}
    \hat{\rho}(t) = \hat{\mathbb{I}}/2^L + p_1 e^{-r_1t}\hat{\tau}_{1} + \sum_{k\geq2}p_k e^{-r_k t}\hat{\tau}_{k},
\end{align}
where $e^{-r_k}$ is the $k$th largest eigenvalue of $S_\mathrm{period}$ with eigenoperator $\hat{\tau}_k$. Note that the largest (0th) eigenvalue is 1 and is nondegenerate when $\gamma>0$, with eigenoperator proportional to the identity, which is the steady state of this system.  The mode with the slowest relaxation for $\gamma>0$ is $\hat{\tau}_{S}\coloneqq \hat{\tau}_1$, which relaxes with rate $r_{S}\coloneqq \mathrm{Re}(r_1)$.  The relaxation rate $r_S$ is proportional to $\gamma$, and we work to first order in $\gamma$~\cite{sels_avalanche,alan_huse_avalanche}.

\emph{\textbf{Relaxation of the slowest mode---}}
\begin{figure}
    \centering
	\includegraphics[width=0.5\textwidth]{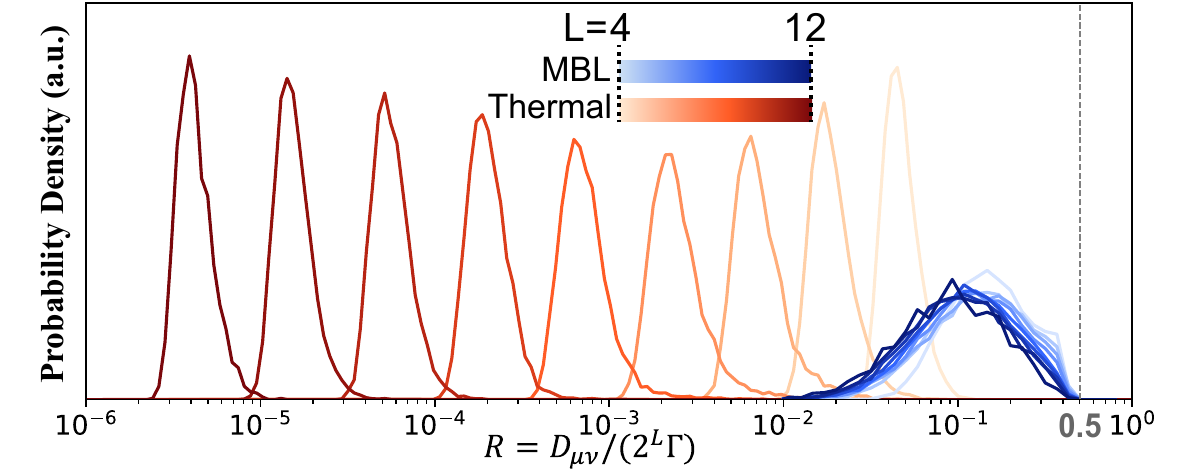}
\caption{
Probability distribution over samples of the ratio $R=D_{\mu\nu} / (2^L \Gamma)$ (see text). In the thermal regime (dark red to yellow), the ratio exponentially decays with increasing $L$. In comparison, $R$ barely drifts with $L$ for the MBL case (blue). We observe that $R$ never exceeds the value 0.5 (gray dashed line), which is explained with the minimal model in the main text. In this figure, we used $\alpha=30$ and $\alpha=1$ for MBL and thermal regimes, respectively.
}
	\label{fig:decay_ratio}
\end{figure}
In the weak-coupling limit, one can obtain $\hat{\tau}_{S}$ as a superposition of the diagonal terms $\ket{n}\bra{n}$, where $\ket{n}$ are the eigenstates of the closed system such that $\hat{U}_F \ket{n} = e^{i\theta_n}\ket{n}$. When $\gamma=0$, then $\ket{m}\bra{n}$ are eigenoperators of $S_{\mathrm{period}}$ with eigenvalues $e^{i(\theta_m-\theta_n)}$, so all diagonal terms $\ket{n}\bra{n}$ are degenerate, with $r_k=0$. Therefore, in the $\gamma \ll 1$ limit, one can obtain $\hat{\tau}_{S}$ in degenerate perturbation theory by diagonalizing the (super)operator $S[\hat{\rho}] \coloneqq \frac{1}{3}\sum_{j=1}^3 \hat{E}_j \hat{\rho} \hat{E}_j^\dagger$ in this degenerate subspace~\cite{sels_avalanche}, where the matrix elements are
\begin{align}
\label{eq:Snm}
S_{mn} &= \bra{m}S[\ket{n}\bra{n}]\ket{m} 
= \frac{1}{3}\sum_{j=1}^3 |\bra{m}\hat{E}_j \ket{n}|^2.
\end{align}
Note that this is a symmetric stochastic matrix with real eigenvalues. In particular, $\hat{\tau}_{S} = \sum_n c_n \ket{n}\bra{n}$ where $\vec{c}$ is the eigenvector of $S$ with the smallest spectral gap $\Gamma$ from the steady state [$\sum_m S_{nm} c_m = (1-\Gamma) c_n$]. The relaxation rate is $r_{S}=3\gamma\Gamma$. We normalize $\hat{\tau}_S$ so $\mathrm{Tr}\{\hat{\tau}_S^2\}=\sum_n|c_n|^2=2^L$.

Naively, the slowest mode is the local integral of motion (LIOM) that is farthest from the bath.  More precisely $\hat{\tau}_S$ is a traceless superposition of projectors on to the eigenstates of the closed system.  Among such operators, it is the one with the smallest weight of Pauli strings with nonidentity at site $L$~\cite{supp_mat}.  It is a LIOM that is indeed localized far from the bath, but it is different in detail from the $\ell$-bits and LIOMs of Refs.~\cite{serbyn_abanin_LIOM,huse_Oganesyan_lbit,chandran_abanin_liom_construction,Pekker-Refael2017_ww_flow}.

The latest time dynamics are determined by $\hat{\tau}_{S}$, with $\hat{\rho}(t) \simeq \hat{\mathbb{I}}/2^L + p_S e^{-r_{S}t}\hat{\tau}_{S}$ for any initial conditions that contain $\hat{\tau}_S$.  This assumes a nonzero gap between $(r_1/\gamma)$ and $(r_2/\gamma)$, which is the case for all samples examined. 
We can view the slow relaxation of $\hat{\tau}_S$ in terms of probability currents that flow between the eigenstates of the isolated system, leading to the final $\hat{\rho}= \hat{\mathbb{I}}/2^L$ equilibrium where all eigenstates have equal weight. 
Specifically, we may quantify the contribution $D_{mn}$ of the pair of eigenstates $m$, $n$ to the relaxation of $\hat{\tau}_S$ as
\begin{align}
D_{mn}\coloneqq S_{mn} (c_m-c_n)^2\geq 0.
\label{eq:Dnm}
\end{align}
One can show~\cite{supp_mat} that the relaxation rate of $\hat{\tau}_S$ is given by the sum of the contributions from all pairs of eigenstates of the closed system:
\begin{align}
\label{eq:sum_Dnm}
\sum_{m<n}D_{mn}=\Gamma~{\rm Tr}\{\hat{\tau}_S^2\} = 2^{L}\Gamma.
\end{align}
We find the pair of eigenstates $\ket{\mu}$,  $\ket{\nu}$ that gives the strongest contribution to the relaxation of $\hat{\tau}_S$ by finding the pair with the largest $D_{mn}$.  We quantify the fraction of the full relaxation that is due to this pair by the ratio $R \equiv {D_{\mu\nu}}/(2^L\Gamma)$.  The distribution of $R$ over disorder realizations and for different system sizes is shown in Fig.~\ref{fig:decay_ratio}, both for the thermal regime and for the MBL regime near the avalanche instability.  In the thermal regime, $R$ decreases exponentially with increasing $L$, indicating that many pairs of eigenstates are contributing similar amounts to the relaxation of the slowest mode, which is as should be expected for this thermal regime.

In the MBL regime, we find that this pair of eigenstates contributes an order-1 fraction that does not decrease substantially with increasing $L$.  This is consistent with previous works~\cite{serbyn_abanin_criterion,alan_huse_avalanche,garratt_roy_resonance,long_anushya_jacobi_resonance} that found extremely broad distributions for the matrix elements of local operators among MBL eigenstates.  On looking at the relaxation more thoroughly, we find that in this MBL regime the strongest contribution to the relaxation actually involves a set of four eigenstates, at least two of which are involved in a many-body near-resonance, as we will now describe in more detail.  

\begin{figure}
    \centering
	\includegraphics[width=0.5\textwidth]{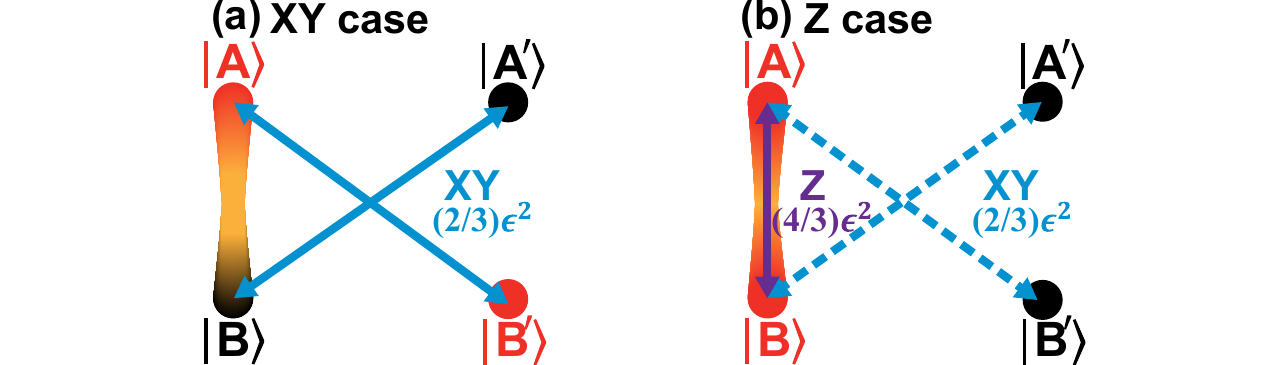}
	\caption{
Two distinct cases for the set of four most important eigenstates of $\hat{U}_F$ for the avalanche to propagate.
The near-resonant pair $\{A,B\}$ are coupled with $\hat{X}_L$ and $\hat{Y}_L$ jump operators to $B'$ and $A'$, respectively, i.e., they have anomalously large matrix elements.
In the $Z$ case (b), the near-resonant pair involves a spin flip next to the bath (site $L$), and the bath can additionally couple the resonant pair with the $\hat{Z}_L$ jump operator with matrix element $S_{AB}$ twice as large as $S_{A'B}$ and $S_{AB'}$. The two states with the largest $D_{\mu\nu}$ are colored red. In the $XY$ case (a), this pair is not the near-resonant pair, and $\{\mu,\nu\}=\{A,B'\}$; while they are the same in the $Z$ case, so $\{\mu,\nu\}=\{A,B\}$. In some samples, $\{A',B'\}$ are also near-resonant (not shown).}
	\label{fig:mechanism}
\end{figure}

\emph{\textbf{Near-resonant eigenstate set---}}
Deep in the MBL regime, we typically find that the  strongest contributions to the relaxation of $\hat{\tau}_S$ come from a set of four eigenstates, which include a near-resonant pair $\ket{A}$, $\ket{B}$ (we explain what ``near-resonant" means below).  The other two eigenstates involved are obtained by ``flipping'' the $\ell$-bit $\hat{\tau}_L$ adjacent to the bath: 
\begin{align}
    \ket{A'}=\hat{\tau}_L^x\ket{A},\quad \ket{B'}=\hat{\tau}_L^x\ket{B}.
\end{align}
In some cases, $\ket{A'}$ and $\ket{B'}$ are also near-resonant, as we discuss in the SM~\cite{supp_mat} with details.
The polarization $c_A=\langle A|\hat{\tau}_S\ket{A}$ of the slow mode differs substantially between the ``$A$'' eigenstates ($\ket{A}$ and $\ket{A'}$) and the ``$B$'' eigenstates, while it is essentially unchanged by flipping $\hat{\tau}_L$, so $c_A\cong c_{A'}$ and $c_B\cong c_{B'}$.  Thus it is the transitions between $\{A,A'\}$ eigenstates and $\{B,B'\}$ eigenstates that relax $\hat{\tau}_S$.  If we ``demix''~\cite{alan_huse_avalanche} the near-resonance to make a more localized (less entangled, more polarized) pair of orthonormal states $\ket{a}$, $\ket{b}$, we obtain
\begin{align}
    \ket{A} \cong \ket{a}-\epsilon\ket{b},\quad\ket{B}\cong\ket{b}+\epsilon \ket{a},
    \label{eq:ABab}
\end{align}
where $\ket{a}$ and $\ket{b}$ differ by spin flips at a nonzero fraction of all sites, including the sites that are most polarized in $\hat{\tau}_S$, 
so this is a long-range many-body resonance, in that sense, and it includes a ``flip'' of $\hat{\tau}_S$.  The spin that is flipped between $\ket{a}$ and $\ket{b}$ that is closest to the bath is at site $x$. For the largest $L$ that we can access, the most probable location of $x$ is near $x/L=0.8$~\cite{supp_mat}.

In the regime near the estimated avalanche instability, the ``mixing'' $\epsilon$ in the near-resonance is typically exponentially small in $L$, although it is exponentially larger than the mixing between other typical pairs of eigenstates that differ over a similar distance range. This is why we say ``near-resonance'': the mixing is relatively strong, partly due to the two eigenstates being near degeneracy in the spectrum of $\hat{U_F}$, but it is not fully resonant, since $\epsilon\ll 1$.

In many samples, $\{A',B'\}$ is not near-resonant and these states are well-approximated by $\ket{A'(B')}\cong\hat{X}_L\ket{a(b)}$. As a consequence of the near-resonance between $\{A,B\}$, i.e., the relatively large $\epsilon$ compared to other pairs of states, there are anomalously large matrix elements $|\bra{A(B)}\hat{E}_{1,2}\ket{B'(A')}|^2 \cong \epsilon^2$ of the two jump operators $\hat{E}_{1(2)}=\hat{X}_L(\hat{Y}_L)$. These drive two particularly large contributions, $D_{AB'}$ and $D_{BA'}$ (and their transposes), to the total relaxation rate of $\hat{\tau}_S$ with $S_{AB'} \cong S_{BA'} \cong 2\epsilon^2/3$. 

In addition, when the near-resonance involves flipping the polarization of site $L$ next to the bath (so $x=L$), there is another anomalously large matrix element. This time it is the matrix element $|\bra{A}\hat{Z}_L\ket{B}|^2\cong 4\epsilon^2$ of the jump operator $\hat{E}_3 = \hat{Z}_L$ between the two near-resonant eigenstates themselves. This results in $S_{AB} \cong 4\epsilon^2/3$ and $D_{AB} \cong 2 D_{AB'(BA')}$ (and its transpose).

Thus, there are two cases for $D_{\mu\nu}$, the largest contribution to Eq.~(\ref{eq:sum_Dnm}), corresponding to whether or not the resonance involves flipping the spin nearest to the bath. The bath can drive relaxation of the slowest mode through a resonance indirectly (with $X$ and $Y$ jump operators), i.e., the pairs of eigenstates involved are not the near-resonant pair, and sometimes directly (with $Z$ jump operator). We depict the two cases in Figs.~\ref{fig:mechanism}(a) and \ref{fig:mechanism}(b) and call them the ``$XY$" and ``$Z$" cases. Our claim is that the relaxation of the slowest mode, and thus the avalanche, proceeds via these dominant processes involving four eigenstates that include a particularly strong near-resonance, as explained in this section. 
This structure implies that the largest contribution $D_{\mu\nu}$ comes from \textit{either} the $Z$ jump operator or the $X$ and $Y$ jump operators of the bath, depending on which of the above cases is relevant. In the $Z$ case, which is $x=L$, the largest contribution ($A$-$B$ pair) is accompanied by two $XY$ contributions of half the magnitude ($A$-$B'$ and $B$-$A'$ pairs). In the $XY$ case, which is $x<L$, there are two roughly equal largest contributions ($A$-$B'$ and $B$-$A'$). In both cases, the largest contribution $D_{\mu\nu}$ doesn't exceed half of the total relaxation.
The phenomenology of the near-resonant eigenstate set explained in this section is corroborated by numerical observations in the next section. An extension of this simple description is discussed in the SM~\cite{supp_mat}.

\begin{figure}
    \centering
	\includegraphics[width=0.5\textwidth]{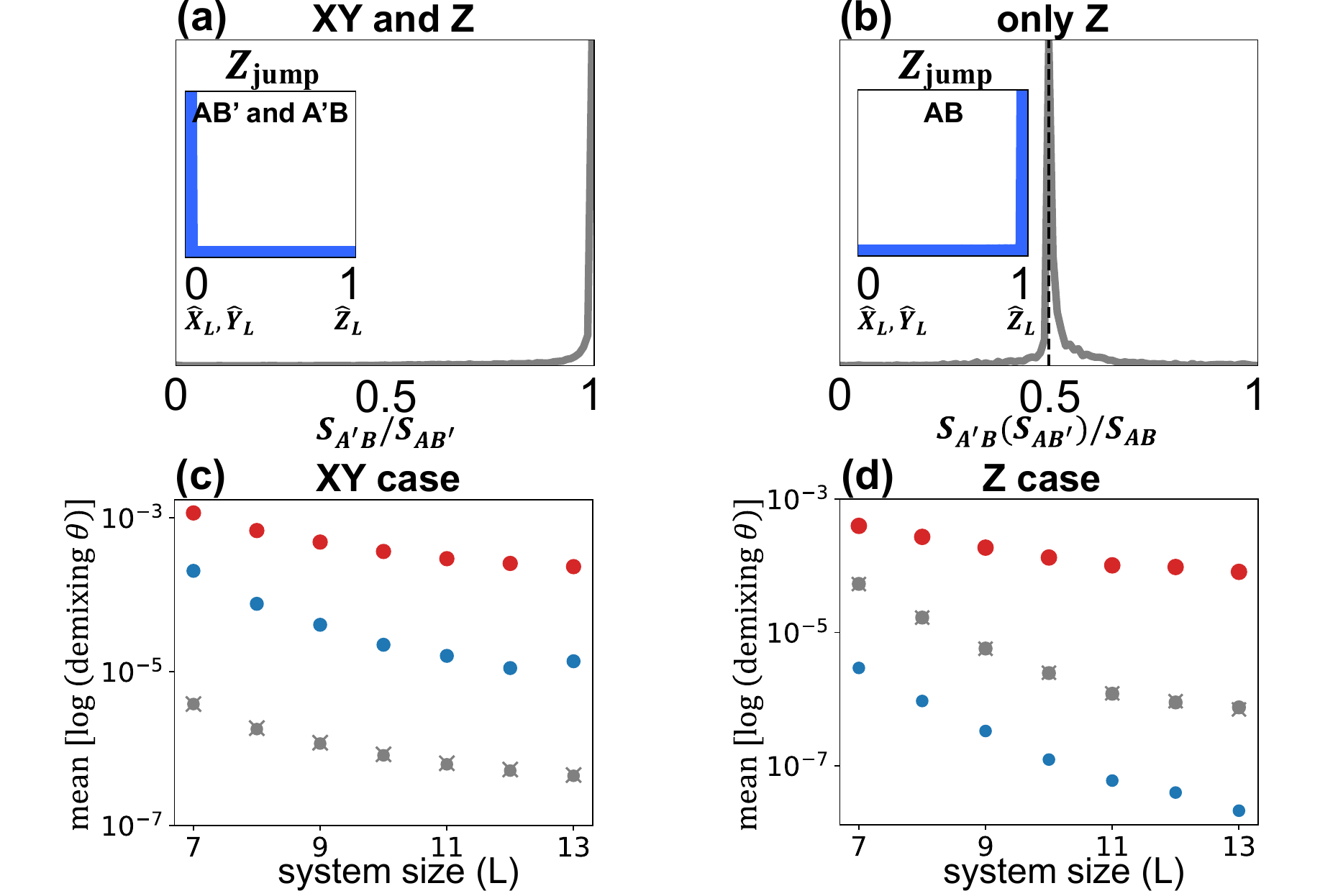}
	\caption{  Consistency with the near-resonant eigenstate set.  All data here are for $\alpha=30$.
 (a) The near-resonance between $\{A,B\}$ causes the $AB'$ and $A'B$ matrix elements to be roughly equal, as described in the minimal model. The distribution of $S_{AB'}/S_{A'B}$ is indeed peaked near $1$. The inset shows $Z_{\mathrm{jump}}$ of $AB'$ and $A'B$ are peaked at zero (coupled with $\hat{X}_L$ and $\hat{Y}_L$).
 (b) When the near-resonant pair involves the spin flip of the last site ($Z$ case), the bath can directly couple $\{A,B\}$. The distribution of the shown ratio of matrix elements demonstrates that $S_{AB}\simeq 2S_{AB'}\simeq 2S_{A'B}$. The inset shows $Z_{\mathrm{jump}}$ of $AB$ is peaked at 1, implying $\{A,B\}$ are coupled with $\hat{Z}_L$.
 The bottom two panels show the demixing angles for (c) the $XY$ case and (d) the $Z$ case. The points correspond to $AB$ (red), $A'B$ (gray $\times$), $AB'$ (gray dot), and $A'B'$ (blue). The near-resonant pair identified as in the main text ($AB$) has the largest demixing angles for all cases. 
 The calculations in (a),(b) are done with $L=11$ and $10^4$ disorder realizations. }
	\label{fig:numerics}
\end{figure}

\emph{\textbf{Numerical Observations---}}
We numerically determine the eigenstates of $\hat{U}_F$, the slowest mode of $S$, and the contributions $D_{mn}$ to its relaxation rate associated with probability currents between eigenstates. We examine the pairs of eigenstates that contribute the most ($\ket{\mu}$ and $\ket{\nu}$), identify the four important eigenstates discussed earlier, and quantify how they are related to each other to verify that the near-resonant set of eigenstates does indeed have that structure.

Recall that in our model, the polarization has a preferred spin direction: $Z$. Deep in the MBL regime, the local $\ell$-bit operators typically are very close to the local single-spin Pauli operators.
Therefore, we identify $\ket{\gamma'}\equiv\tau^x_L\ket{\gamma}$ for $\gamma\in\{\mu,\nu\}$ simply by finding the eigenstate with largest $\lvert\bra{\gamma}\hat{X}_L\ket{n}\rvert$ among all eigenstates $\ket{n}$.

To determine which jump operator from the bath dominantly couples $\ket{\mu}$ and $\ket{\nu}$ we define a tool $Z_{\mathrm{jump}}(m,n) \coloneqq |\bra{m}\hat{Z}_L \ket{n}|^2 / \sum_{j=1}^3|\bra{m}\hat{E}_j \ket{n}|^2$.
Note that $\hat{E}_3 = \hat{Z}_L$. We find that $Z_{\mathrm{jump}}(\mu,\nu)$ is extremely close to 0 or 1 in any one sample, as shown in the inset of Fig.~\ref{fig:numerics}(a) and \ref{fig:numerics}(b). This separation is captured by the minimal model since $Z_{\mathrm{jump}}(\mu,\nu)\cong 0$ (or $1$) is associated to the $XY$ (or $Z$) case where the pair $\{\mu, \nu\}$ isn't (or is) the near-resonant pair $\{A, B\}$. 
We therefore identify the near-resonant pair for each sample based on the minimal model as follows.
In the case that $\ket{\mu}$ and $\ket{\nu}$---the pair with the largest $D_{mn}$---are ``connected" with $Z$ ($Z_\mathrm{jump} \cong 1$), these states are identified as $\ket{A}$ and $\ket{B}$.
Otherwise, if they are connected with $X$ and $Y$ instead ($Z_\mathrm{jump} \cong 0$), we associate $\{\ket{A}, \ket{B})\}$ to $\{\ket{\mu'}, \ket{\nu}\}$ or $\{\ket{\mu}, \ket{\nu'}\}$, whichever pair has the smaller quasienergy splitting.

We first checked that $c_\gamma \cong c_{\gamma'}$ indeed holds for $\gamma\in\{A,B\}$. This relation is satisfied because $S_{\gamma'\gamma}$ is of order one, so the slow mode $\hat\tau_S$ contains negligible population difference between $\gamma$ and $\gamma'$. 
Also, we checked that $|c_\mu - c_\nu|$ is of order one, which should be true as we picked the pair with the largest $D_{\mu\nu}= S_{\mu\nu} (c_\mu-c_\nu)^2$.
Therefore, the matrix elements $S_{mn}$ (Eq.~\ref{eq:Snm}) are a good proxy for $D_{mn}$ (Eq.~\ref{eq:Dnm}) within this set of important eigenstates. We compare these matrix elements to confirm the results are consistent with the minimal model described in Fig.~\ref{fig:mechanism}(a) and \ref{fig:mechanism}(b). As we show in Fig.~\ref{fig:numerics}(a), $S_{AB'} \cong S_{BA'}$ holds (and their transposes), and $XY$ jump operators couple them (inset). Furthermore, when $x=L$,  we additionally observe $S_{AB} \cong 2 S_{AB'} \cong 2 S_{BA'}$ as presented in Fig.~\ref{fig:numerics}(b) and $\{A,B\}$ is coupled with $\hat{Z}_L$ (inset).
We note that both cases satisfy $R<1/2$.

We finally tested our picture using the ``demixing" procedure of Ref.~\cite{alan_huse_avalanche}, calculating the most localized basis of a subspace spanned by \textit{two} eigenstates, and the corresponding basis rotation. The basis rotation corresponds to a location on a Bloch sphere; the polar angle, called the ``demixing angle," is a measure of the resonance strength between the two eigenstates [$\epsilon$ in Eq.~(\ref{eq:ABab})].
As shown in Figs.~\ref{fig:numerics}(c) and \ref{fig:numerics}(d), the demixing angle between the eigenstates we identified as $\{A,B\}$ is by far the largest among other pairs in our eigenstate set, consistent with the idea that that pair is indeed a near-resonance that enables the bath to relax the slowest mode, i.e., the avalanche to propagate. 

\emph{\textbf{Conclusion---}}
Our work bridges the two key ingredients that destabilize many-body localization---\textit{many-body resonances} and the \textit{avalanche instability}---demonstrating that avalanches spread primarily by strong rare near-resonances.
Assuming the avalanche has proceeded for a sufficiently large distance, we model the putative thermal bubble as an infinite Markovian bath with infinite temperature. In this model, we discovered the existence of dominant processes in the avalanche, involving only a few pairs of eigenstates of the closed system, including a strong near-resonance. The avalanche proceeds through these rare eigenstate pairs, leveraging many-body near-resonances to relax the spins some distance away along the chain. 
The inner structure of the dominant set of eigenstates is dictated by whether or not the associated resonance involves flipping a spin at the site next to the bath. This sets what jump operators effectively use the resonance present in the closed system to spread the avalanche. We presented a minimal model involving two near-resonant eigenstates and two additional auxiliary states to explain how this works in detail, and verified our picture with numerical observations. Our work advances the understanding of the avalanche instability of many-body localization and provides a detailed connection to rare many-body resonances present in MBL systems. Some further discussion of our conclusions and model assumptions is provided in the SM~\cite{supp_mat}.

\begin{acknowledgments}
\emph{\textbf{Acknowledgements---}}
We thank Sarang Gopalakrishnan, Vedika Khemani, Jacob Lin, and Vir Bulchandani for discussions, and Luis Colmenarez and David Luitz for previous collaboration on related work.  The work at Princeton was supported in part by NSF QLCI grant OMA-2120757.  A.M. was also supported in part by the Stanford Q-FARM Bloch Postdoctoral Fellowship in Quantum Science and Engineering and the Gordon and Betty Moore Foundation’s EPiQS Initiative through Grant No. GBMF8686.
Simulations presented in this work were performed on computational resources managed and supported by Princeton Research Computing.
\end{acknowledgments}

\bibliography{main}

\end{document}


\title{\textbf{Supplemental Material for \\
``Many-body resonances in the avalanche instability of many-body localization''}}

\author{Hyunsoo Ha}
\affiliation{Department of Physics, Princeton University, Princeton, New Jersey 08544, USA}

\author{Alan Morningstar}
\affiliation{Department of Physics, Princeton University, Princeton, New Jersey 08544, USA}
\affiliation{Department of Physics, Stanford University, Stanford, California 94305, USA}

\author{David A. Huse}
\affiliation{Department of Physics, Princeton University, Princeton, New Jersey 08544, USA}

\date{\today}
\maketitle
\onecolumngrid

\begin{center}
\begin{minipage}{0.85\textwidth}
\vspace{-0.8cm}
In this supplemental material, we (i) provide details on the model for the closed system that hosts thermal to MBL transition, (ii) derive useful relations on the slowest mode, and quantitatively compare it with $\ell$-bits and LIOMs introduced in the previous references, (iii) give details on the decay rate of the density matrix at the latest time, which is governed by the slowest mode, (iv) investigate the effect of flipping $\hat{\tau}_L$ on detuning of the near-resonance, (v) present the range of near-resonance that was prominently used in the slowest mode, and (vi) discuss the justification of the small-system-weakly-coupled-to-a-bath avalanche model that we use.
\vspace{1cm}
\end{minipage}
\end{center}

\setcounter{section}{0}
\setcounter{figure}{0}
\setcounter{equation}{0}
\renewcommand{\thefigure}{S\arabic{figure}}
\renewcommand{\theequation}{S\arabic{equation}}
\renewcommand{\thesection}{S\arabic{section}}

\section{(i) Random Floquet Model}
\label{model}
We use the Floquet circuit model of Morningstar, \textit{et al.}~\cite{alan_huse_avalanche} and depicted in Fig.~\ref{fig:randomfloquet}. The Floquet unitary operator $U_F$ is obtained by applying a layer of one-site gates $U_d$ followed by $L-1$ two-site gates---one per bond---denoted by $U_u$. The onsite gates are first sampled from the $2\times2$ circular unitary ensemble (CUE) and then diagonalized, so the localizing disorder of the model is in the $Z$ direction. The two-site gates $u_i$ act on sites $i$ and $i+1$ and can be written as
\begin{align}
    u_i = \mathrm{exp}\left( \frac{i}{\alpha}M_i \right)
\end{align}
where $M_i$ are sampled from the $4\times4$ Guassian unitary ensemble (GUE). $1/\alpha$ controls the interaction strength so that the effect of the disorder gets stronger as $\alpha$ increases. The order of applying the $u_i$'s is determined randomly for each sample.
\begin{figure}[h]
\centering
\includegraphics[width=0.5\textwidth]{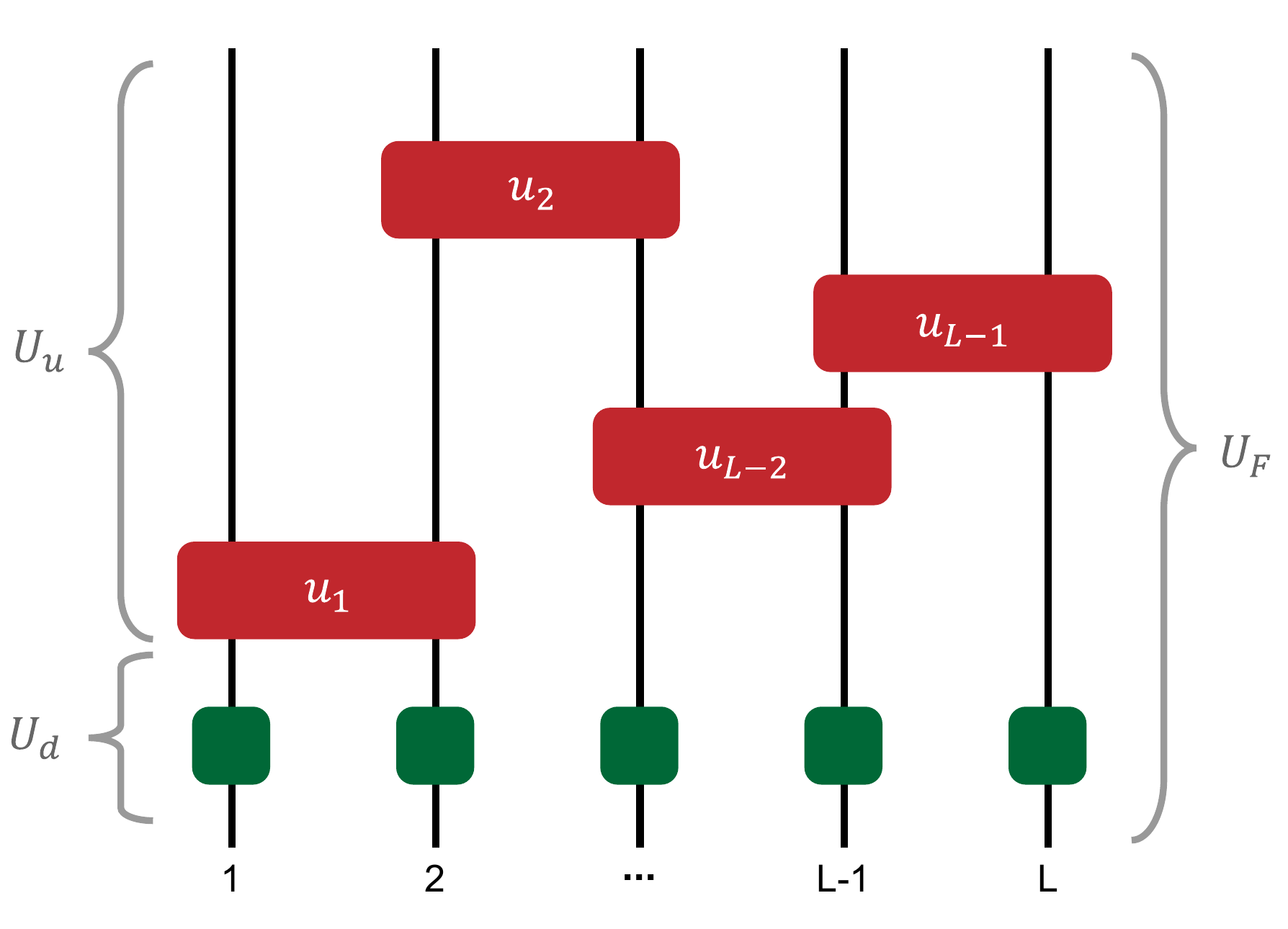}
\caption{ 
 Random Floquet Model introduced in Ref.~\cite{alan_huse_avalanche}. Details of the model are described in the Appendix. }  
\label{fig:randomfloquet}
\end{figure}

\section{(ii) Slowest Mode}
As explained in the main text, one can obtain the slowest mode $\hat{\tau}_S$ in the weak-coupling limit by diagonalizing the superoperator $S[\hat{\rho}] \coloneqq \frac{1}{3}\sum_\mu \hat{E}_\mu \hat{\rho} \hat{E}_\mu^\dagger$, in the degenerate subspace of Floquet eigenstates $\ket{n}\bra{n}$, where the matrix elements are
\begin{align}
S_{nm} = \bra{m}S[\ket{n}\bra{n}]\ket{m} = \frac{1}{3}\sum_\mu |\bra{m}\hat{E}_\mu \ket{n}|^2
\end{align}
and $\hat{E}_\mu = (\hat{X}_L, \hat{Y}_L, \hat{Z}_L)$ are the jump operators at the spin next to the bath.
In particular, $\hat{\tau}_{S} = \sum_n c_n \ket{n}\bra{n}$ where $\vec{c}_n$ is the eigenvector of $S_{nm}$ with the smallest spectral gap $\Gamma$.
As the slowest mode is constructed from the degenerate subspace of $\ket{n}\bra{n}$, it is a conserved quantity (a LIOM) of the purely unitary dynamics given by $U_F$.

We derive useful features of the slowest mode $\hat{\tau}_S$ below.

\begin{lemma}
For any Pauli-string operator $P'$ acting on the $(L-1)$ spins away from the bath, and $\hat{E}_\mu = (\hat{X}_L, \hat{Y}_L, \hat{Z}_L)$ acting on the last spin next to the bath, the following holds:
\begin{align}
\label{eq:lemma_S[PS]}
S[P' \otimes \mathbb{I}_L] = P' \otimes \mathbb{I}_L, \quad S[P' \otimes \hat{E}_{\mu}] = -\frac{1}{3}P' \otimes \hat{E}_{\mu}.
\end{align}
\label{lemma:S}
\end{lemma}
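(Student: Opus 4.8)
The plan is to exploit the fact that all three jump operators $\hat{E}_\mu=(\hat{X}_L,\hat{Y}_L,\hat{Z}_L)$ act only on the last spin, so that $S$ factorizes into a trivial action on the $(L-1)$-site factor $P'$ and a single-qubit channel on site $L$. First I would note that $P'$ acts only on sites $1,\dots,L-1$ and therefore commutes with every $\hat{E}_\mu$, while the Paulis are Hermitian so $\hat{E}_\mu^\dagger=\hat{E}_\mu$. Writing the site-$L$ factor of the input as $Q_L\in\{\mathbb{I}_L,\hat{E}_1,\hat{E}_2,\hat{E}_3\}$, the definition of $S$ then gives
\begin{align}
S[P'\otimes Q_L]=P'\otimes\Big(\tfrac{1}{3}\sum_{\mu=1}^3\hat{E}_\mu\, Q_L\, \hat{E}_\mu\Big),
\end{align}
reducing the whole statement to evaluating the single-qubit channel $\mathcal{E}[Q]\coloneqq\tfrac{1}{3}\sum_\mu\sigma_\mu Q\sigma_\mu$ on the single-spin Pauli basis $\{\mathbb{I},X,Y,Z\}$.

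The key step is then a one-line Pauli-algebra computation. The only input needed is the identity $\sigma_a\sigma_b\sigma_a=\sigma_b$ for $a=b$ and $\sigma_a\sigma_b\sigma_a=-\sigma_b$ for $a\neq b$, which follows from $\sigma_a^2=\mathbb{I}$ together with the anticommutation $\{\sigma_a,\sigma_b\}=0$ for $a\neq b$. Applied to $Q=\mathbb{I}$ each of the three terms returns $\mathbb{I}$, giving $\mathcal{E}[\mathbb{I}]=\mathbb{I}$. Applied to $Q=X$, the matching Pauli contributes $+X$ and the two non-matching Paulis each contribute $-X$, so $\mathcal{E}[X]=\tfrac{1}{3}(X-X-X)=-\tfrac{1}{3}X$; the identical count gives $\mathcal{E}[Y]=-\tfrac{1}{3}Y$ and $\mathcal{E}[Z]=-\tfrac{1}{3}Z$.

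Substituting these back into the factorized form yields $S[P'\otimes\mathbb{I}_L]=P'\otimes\mathbb{I}_L$ and $S[P'\otimes\hat{E}_\mu]=-\tfrac{1}{3}\,P'\otimes\hat{E}_\mu$, which is precisely Eq.~\eqref{eq:lemma_S[PS]}. I do not expect a genuine obstacle here: the argument is just the single-qubit depolarizing-type spectrum lifted by the tensor factorization, and the only point requiring care is the anticommutation sign, which is exactly what produces the eigenvalue $-1/3$ on the three site-$L$ Paulis while leaving the identity direction fixed. This eigenstructure---$S$ fixes Pauli strings trivial at site $L$ and contracts by $-1/3$ those carrying a single $X$, $Y$, or $Z$ there---is the content of the lemma, and is what underlies the later characterization of $\hat{\tau}_S$ as the traceless superposition of projectors minimizing the weight of Pauli strings acting nontrivially on site $L$.
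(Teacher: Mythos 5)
Your proof is correct and follows essentially the same route as the paper's: both use the tensor-factorization $S[P'\otimes M]=P'\otimes\bigl(\tfrac{1}{3}\sum_\mu \hat{E}_\mu M \hat{E}_\mu^\dagger\bigr)$ to reduce the claim to a single-qubit computation on site $L$. The only difference is that you spell out the Pauli-algebra step ($\sigma_a\sigma_b\sigma_a=\pm\sigma_b$) that the paper leaves implicit in ``Eq.~follows by substituting $\mathbb{I}_L$ and $\hat{E}_\mu$.''
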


\begin{proof}
The super-operator $S$ only acts on site $L$, such that for an arbitrary matrix $M\in \mathbb{M}_{(2,2)}$, $S[P' \otimes M] = P' \otimes \left(\frac{1}{3}\sum_{\mu} \hat{E}_{\mu} M \hat{E}_{\mu}^\dagger \right)$. Eq.~\ref{eq:lemma_S[PS]} follows by substituting $\mathbb{I}_L$ and $\hat{E}_\mu$ in $M$.
\end{proof}

\begin{theorem}
\label{theorem:pauli}
The slowest mode is the conserved quantity of the closed system (which is not an identity) with the smallest weight from the Pauli strings with non-identity at site $L$ (connected to the bath).
\end{theorem}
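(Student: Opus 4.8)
\emph{Proof proposal.} The plan is to recast the defining eigenproblem for $\hat{\tau}_S$ as a variational principle and then recognize the weight on the non-identity-at-$L$ Pauli strings as (up to an affine change) the very quantity being extremized. First I would fix the space over which we optimize. Because the spectrum of $\hat{U}_F$ is nondegenerate in these random models, every conserved quantity $\hat{Q}$ (one with $\hat{U}_F\hat{Q}\hat{U}_F^\dagger=\hat{Q}$) is diagonal in the eigenbasis, $\hat{Q}=\sum_n c_n\ket{n}\bra{n}$, so the space of conserved quantities is precisely the degenerate subspace in which $S$ is diagonalized. The steady state is the identity, which is the unique conserved quantity realizing the top eigenvalue $1$ of the symmetric matrix $S_{mn}$ (the uniform vector $c_n=\mathrm{const}$); every other eigenvector is orthogonal to it, i.e.\ traceless. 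Thus ``conserved quantity which is not the identity'' means we work in the orthogonal complement of the steady state, exactly the sector in which all decaying modes, including $\hat{\tau}_S$, live.

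The key structural input is Lemma~\ref{lemma:S}. Since the Pauli strings form a complete orthonormal (Hilbert--Schmidt) basis and each is of the form $P'\otimes\mathbb{I}_L$ or $P'\otimes\hat{E}_\mu$, the lemma says $S$ is already diagonal in this basis with only two eigenvalues: $+1$ on the strings that are identity at site $L$, and $-\tfrac13$ on those with non-identity at site $L$. Hence, as a superoperator,
\begin{align}
S=\Pi_+-\tfrac13\,\Pi_-,
\end{align}
where $\Pi_-$ projects onto the span of the non-identity-at-$L$ strings. I would then identify the ``weight from Pauli strings with non-identity at site $L$'' of a (normalized) operator $\hat{O}$ as $w_L(\hat{O})=\|\Pi_-\hat{O}\|^2$, the sum of its squared Pauli coefficients on those strings.

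Next I would establish the Rayleigh-quotient identity. For a diagonal $\hat{O}=\sum_n c_n\ket{n}\bra{n}$ a direct trace computation gives $\mathrm{Tr}\{\hat{O}\,S[\hat{O}]\}=\sum_{m,n}c_m c_n S_{mn}=\vec{c}^{\,T}S\vec{c}$, so the quadratic form of the effective matrix $S_{mn}$ on the conserved sector agrees with the full-space form generated by the superoperator. Combining this with $S=\Pi_+-\tfrac13\Pi_-$ and $\|\Pi_+\hat{O}\|^2+\|\Pi_-\hat{O}\|^2=\mathrm{Tr}\{\hat{O}^2\}$ yields
\begin{align}
\frac{\mathrm{Tr}\{\hat{O}\,S[\hat{O}]\}}{\mathrm{Tr}\{\hat{O}^2\}}=1-\frac{4}{3}\,\frac{w_L(\hat{O})}{\mathrm{Tr}\{\hat{O}^2\}}.
\end{align}
By Courant--Fischer, the slowest mode $\hat{\tau}_S$ maximizes this Rayleigh quotient over conserved quantities orthogonal to the identity, since it realizes the eigenvalue $1-\Gamma$, the largest one below the steady-state eigenvalue $1$. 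Because the right-hand side is strictly decreasing in the normalized weight, maximizing the quotient is exactly minimizing $w_L(\hat{O})/\mathrm{Tr}\{\hat{O}^2\}$, which is the claim.

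The hard part is not any single calculation but assembling the pieces with the right bookkeeping: verifying that the quadratic form computed within the diagonal (conserved) subspace coincides with the full-space form produced by the spectral decomposition of $S$, and correctly handling the orthogonality-to-identity constraint. That constraint is essential, since adding a component along $\mathbb{I}$ (which lies entirely in $\Pi_+$) would trivially dilute the weight fraction toward zero; restricting to the traceless decaying sector is what makes ``minimal weight'' well posed and makes it coincide with ``slowest mode'' (the second-largest eigenvalue) rather than the steady state. Once the decomposition $S=\Pi_+-\tfrac13\Pi_-$ from Lemma~\ref{lemma:S} is in hand, the variational equivalence is immediate.
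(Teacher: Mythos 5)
Your proposal is correct and is essentially the paper's own argument: both rest on Lemma~\ref{lemma:S} plus the dual evaluation of $\mathrm{Tr}\{\hat{O}\,S[\hat{O}]\}$ (in the eigenbasis giving $\vec{c}^{\,T}S\vec{c}$, in the Pauli basis giving $1-\tfrac{4}{3}w_L$), after which the paper extremizes via a Lagrange multiplier while you invoke the equivalent Rayleigh-quotient/Courant--Fischer characterization of the second-largest eigenvalue. Your explicit handling of the orthogonality-to-identity constraint and the packaging $S=\Pi_+-\tfrac13\Pi_-$ are slightly cleaner statements of steps the paper leaves informal, but they are not a different route.
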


\begin{proof}
Consider a conserved quantity written as $\hat{\tau} = \sum_n c_n \ket{n}\bra{n}$ where $c_n\in \mathbb{R}$. The normalization condition is $\sum_n c_n^2=2^L$ as we defined in the main text. We can write the slowest mode in the Pauli-string basis like $\hat{\tau}_{S}=\sum_P \bar{c}_{P} P$ with $P$ being the Pauli-strings. The coefficients are $\bar{c}_P={\rm Tr}(\hat{\tau}_{S}P)/2^L$. In the Pauli-string basis, the normalization condition is rewritten as $\sum_P \bar{c}_P^2 = 1$. The total weight of the Pauli-strings with non-identity at site $L$ is
\begin{align}
\label{eq:L_weight}
    w_L &= \sum_{P'} \bar{c}_{P'\otimes \hat{X}_L}^2 + \sum_{P'} \bar{c}_{P'\otimes \hat{Y}_L}^2 + \sum_{P'} \bar{c}_{P'\otimes \hat{Z}_L}^2 = \sum_{\mu} \sum_{P'} \bar{c}_{P' \otimes \hat{E}_\mu}^2
\end{align}
where $P'$ is a Pauli-string defined on L-1 sites and $\hat{E}_{\mu} = (\hat{X}_L, \hat{Y}_L, \hat{Z}_L)$ are the Pauli operators from site $L$.

Next, we define a functional $R(\vec{c}) \coloneqq \sum_{n,m}c_n S_{nm} c_m$. Below, we show how the functional $R$ is related to $w_L$.
\begin{align}
{\rm Tr}\left( \hat{\tau} S[\hat{\tau}])\right) &= {\rm Tr}\left( \sum_l c_l \ket{l}\bra{l} ~\sum_n c_n S[\ket{n}\bra{n}]\right) \nonumber\\
&= {\rm Tr}\left( \sum_l c_l \ket{l}\bra{l} ~\sum_{n,m} c_n S_{nm} \ket{m}\bra{m}\right) \nonumber\\
&= \sum_{n,m}c_n S_{nm} c_m = R
\end{align}
We can rewrite the same equation under the Pauli-string basis, $\hat{\tau} = \sum_P \bar{c}_P P$. Here, we use Lemma~$\ref{lemma:S}$ introduced earlier.
\begin{align}
{\rm Tr}\left( \hat{\tau} S[\hat{\tau}])\right) &= \sum_{P'} \bar{c}_{P' \otimes \mathbb{I}}^2 - \frac{1}{3}\sum_\mu \sum_{P'} \bar{c}_{P' \otimes \hat{E}_\mu}^2 = 1 - \frac{4}{3} \sum_\mu \sum_{P'} \bar{c}_{P' \otimes \hat{E}_\mu}^2 = 1 - \frac{4}{3} w_L
\end{align}
From the above two equations, we derive the relation $R = 1 - (4/3) w_L$.

In the final step, let's define $f \coloneqq \sum_n c_n^2$. Using the Lagrange multiplier method, functional $R$ is local extremum under the constraint $f=2^L$ when $ \nabla R - \lambda \nabla f = 0$ holds, which is equivalent to the eigenvalue problem: $\forall n,\quad \sum_m S_{nm} c_m = \lambda c_n$. In this condition, we obtain $R=\lambda$, and the weight of the Pauli-strings with non-identity at site $L$ is $w_L = (3/4)(1-\lambda)$. Therefore, finding $\hat{\tau}$ which minimizes $w_L$ is equivalent to finding the eigenvector of the super-operator $S$ with the second largest eigenvalue (the largest eigenvalue is one which corresponds simply to an identity matrix). Hence, the conserved quantity with the smallest $w_L$ except the identity is the slowest mode $\hat{\tau}_{S}$.
\end{proof}
\clearpage

\begin{theorem}
The slowest mode is traceless.
\end{theorem}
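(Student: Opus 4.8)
The plan is to reduce the statement to a simple orthogonality of eigenvectors. Since $\mathrm{Tr}\{\ket{n}\bra{n}\}=1$, the trace of the slowest mode is $\mathrm{Tr}\{\hat{\tau}_S\}=\sum_n c_n=\vec{u}\cdot\vec{c}$, where $\vec{c}$ is the eigenvector defining $\hat{\tau}_S$ and $\vec{u}=(1,1,\dots,1)$ is the all-ones vector. Thus the entire claim is equivalent to showing $\vec{u}\cdot\vec{c}=0$.

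First I would establish that $\vec{u}$ is an eigenvector of $S$ with eigenvalue $1$, i.e.\ that $S$ is stochastic. Using $\hat{E}_\mu^\dagger \hat{E}_\mu=\mathbb{I}$ and $\sum_m\ket{m}\bra{m}=\mathbb{I}$,
\begin{align}
\sum_m S_{nm}=\frac{1}{3}\sum_\mu\sum_m|\bra{m}\hat{E}_\mu\ket{n}|^2=\frac{1}{3}\sum_\mu\bra{n}\hat{E}_\mu^\dagger \hat{E}_\mu\ket{n}=1,
\end{align}
so $\sum_m S_{nm}\cdot 1=1$ for every $n$; this $\vec{u}$ is the steady-state eigenvector, corresponding to the identity operator $\mathbb{I}=\sum_n\ket{n}\bra{n}$. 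Next I would invoke that $S_{nm}$ is real and symmetric, which follows immediately from $|\bra{m}\hat{E}_\mu\ket{n}|^2=|\bra{n}\hat{E}_\mu\ket{m}|^2$ (each $\hat{E}_\mu$ is Hermitian), as already noted in the main text. For a real symmetric matrix, eigenvectors belonging to distinct eigenvalues are orthogonal. The slowest mode is the eigenvector $\vec{c}$ with eigenvalue $1-\Gamma$, and by its defining property $\Gamma>0$, so $1-\Gamma\neq 1$. Hence $\vec{c}$ is orthogonal to $\vec{u}$, giving $\sum_n c_n=0$ and therefore $\mathrm{Tr}\{\hat{\tau}_S\}=0$.

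The argument is essentially forced once these pieces are assembled, so the only point requiring care --- which I regard as the main (and mild) obstacle --- is ensuring $\vec{c}$ genuinely lies in an eigenspace distinct from that of $\vec{u}$. This is the Perron--Frobenius statement for the stochastic matrix $S$: if the graph of nonzero entries $S_{nm}$ is connected (equivalently, the bath-induced dynamics on Floquet eigenstates is irreducible), the top eigenvalue $1$ is simple, with the strictly positive Perron vector $\vec{u}$, and every other eigenvector is orthogonal to it. Even without invoking irreducibility, symmetry alone suffices: the slow-mode eigenspace has eigenvalue $1-\Gamma<1$ and is therefore orthogonal to the $\lambda=1$ eigenspace containing $\vec{u}$, independent of any degeneracy of the eigenvalue $1$.
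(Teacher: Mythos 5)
Your proof is correct, but it reaches the conclusion through a different lemma than the paper does. The paper's proof stays entirely at the superoperator level: since $S[\hat{\rho}]=\frac{1}{3}\sum_\mu \hat{E}_\mu\hat{\rho}\hat{E}_\mu^\dagger$ is trace-preserving by cyclicity of the trace and $\hat{E}_\mu^\dagger\hat{E}_\mu=\mathbb{I}$, applying $\mathrm{Tr}$ to the eigenoperator equation $S[\hat{\tau}_S]=(1-\Gamma)\hat{\tau}_S$ gives $\mathrm{Tr}(\hat{\tau}_S)=\mathrm{Tr}(\hat{\tau}_S)/(1-\Gamma)$, which forces $\mathrm{Tr}(\hat{\tau}_S)=0$ once $\Gamma>0$ --- a one-line argument that never invokes the symmetry of $S_{nm}$, the matrix representation on the degenerate subspace, or any spectral theory. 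You instead reduce to the matrix $S_{nm}$, verify stochasticity so that the all-ones vector $\vec{u}$ is a $\lambda=1$ eigenvector, and invoke orthogonality of eigenvectors of a real symmetric matrix belonging to distinct eigenvalues. The two arguments share the same backbone --- your row-sum identity $\sum_m S_{nm}=1$ is precisely the matrix-level form of trace preservation, and in both cases the punchline is that the eigenvalue $1-\Gamma\neq 1$ annihilates $\vec{u}\cdot\vec{c}=\mathrm{Tr}(\hat{\tau}_S)$. What your route buys is transparency about the steady-state sector, and you correctly note that neither irreducibility nor Perron--Frobenius simplicity is needed, since symmetry alone gives orthogonality to the entire $\lambda=1$ eigenspace; that remark closes the only potential gap in your argument. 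What the paper's route buys is economy and slightly greater generality: it needs only $\sum_\mu\hat{E}_\mu^\dagger\hat{E}_\mu\propto\mathbb{I}$, so it would survive a choice of jump operators for which $S_{nm}$ fails to be symmetric, and it does not require first restricting to the diagonal subspace. Both proofs rest on the same essential input, $\Gamma>0$, which you rightly identify as the one point requiring care; it is supplied here by the definition of $\hat{\tau}_S$ as the eigenvector with the smallest \emph{nonzero} spectral gap.
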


\begin{proof}
From the definition of the slowest mode, it is an eigenmatrix of superoperator $S$ with an eigenvalue $1-\Gamma$. Therefore, ${\rm Tr}(\hat{\tau}_{S})={\rm Tr}(S[\hat{\tau}_{S}])/(1-\Gamma)={\rm Tr}(\frac{1}{3}\sum_\mu \hat{E}_\mu \hat{\tau}_{S} \hat{E}_\mu^\dagger)/(1-\Gamma)={\rm Tr}(\hat{\tau}_{S})/(1-\Gamma)$. As $0<\Gamma<1$, it should be traceless ${\rm Tr}(\hat{\tau}_{S})=0$.  
\end{proof}

We numerically compared the slowest mode $\hat{\tau}_S$ with the LIOM ($\tau_1$) and $\ell$-bit ($\bar{\tau}_1$) introduced in Refs.~\cite{serbyn_abanin_LIOM,huse_Oganesyan_lbit,chandran_abanin_liom_construction}. The explicit forms of $\tau_1$ and $\bar{\tau}_1$ are
\begin{align}
\tau_1 &\equiv \sum_n \bra{n}\hat{Z}_1\ket{n} \ket{n}\bra{n}\\
\bar{\tau}_1 &\equiv \sum_n \mathrm{sgn}(\bra{n}\hat{Z}_1\ket{n} )\ket{n}\bra{n},
\end{align}
where $\ket{n}$s are the eigenfunctions of the closed system. Note that $\tau_1$ and $\bar{\tau}_1$ are the conserved quantities that are localized near site $1$, farthest from the bath.
In particular, we calculate the Pauli string weight $w_i(\hat{\tau})$ for $\hat{\tau} \in \{\tau_1, \bar{\tau}_1,\hat{\tau}_S\}$, 
which is the total weight of the Pauli string bases with the last non-identity matrix placed at $i$-th site. These Pauli bases are products of identity matrices for sites farther than $i$.

We compare the Pauli string weight in Fig.~\ref{fig:LIOM_comparison}. For sufficiently large disorder strength $\alpha$, all three operators are localized far from the bath. It is clear that the slowest mode is different from $\tau_1$ and $\bar{\tau}_1$. Compared to the other two quantities, the slowest mode has the least weight at the last site, as we proved in Theorem~\ref{theorem:pauli}. The difference manifests when we compare the three operators for a single sample. While $\tau_1$ and $\bar{\tau}_1$ always have maximum weight on the first site, the slowest mode often has maximum weight other than the first site but is still far away from the bath. What constrains the slowest mode is not the first site to have the maximum weight but to have the least weight on the last site next to the bath, which was indeed true for every sample.
\begin{figure*}
\centering
\includegraphics[width=\textwidth]{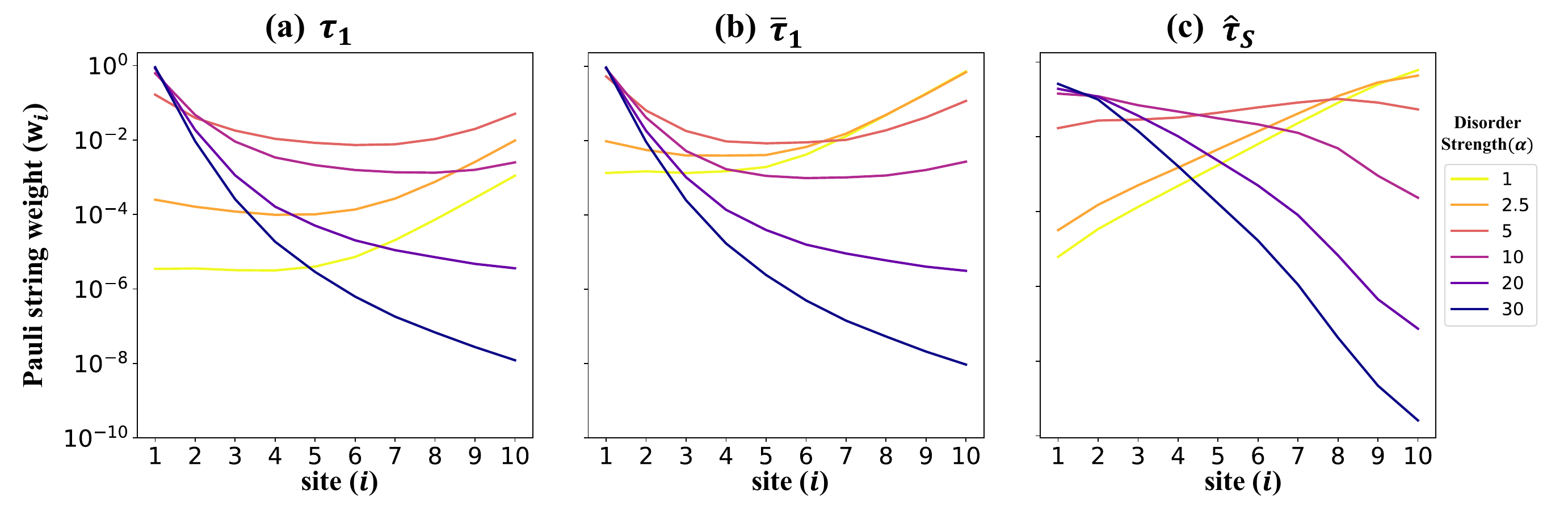}
\caption{ Pauli String weight $w_n$ for (a) $\tau_1$, (b) $\bar{\tau}_1$, and (c) the slowest mode $\tau_S$, averaged (after taking the logarithm) for $10^3$ samples with system size $L=10$.
 }  
\label{fig:LIOM_comparison}
\end{figure*}

\section{(iii) Derivation of Equation 5}
As explained in the main text, the slowest mode is $\hat{\tau}_S=\sum_n c_n \ket{n}\bra{n}$ where $\vec{c}_n$ is the eigenvector of a symmetric Markovian matrix $S_{nm}$ with the smallest spectral gap $\Gamma$ ($\sum_m S_{nm} c_m = (1-\Gamma) c_n$). The operator $\hat{\tau}_S$ is normalized such that $\lVert \hat{\tau}_S \rVert_1 ^2=\sum_n c_n^2 = 2^{L}$.  In our model, the relaxation rate of the slowest mode is $r_{S}=3\gamma\Gamma$. 
We assume a gap between the smallest and second-smallest spectral gaps. With this assumption, the density matrix at the latest time is asymptotically $\hat{\rho}(t) \simeq (\hat{\mathbb{I}} + p_S e^{-r_{S}t}\hat{\tau}_{S})/2^L$. Therefore, $\hat{\rho}(t)$ thermalizes and converges to $\hat{\rho}(\infty)=\mathbb{I}/2^L$ asymptotically at the latest time with the rate:

\begin{align}
\begin{split}
\frac{d}{dt}\left(\lVert \hat{\rho}(t)-\hat{\rho}(\infty) \rVert_1\right)^2 
&\simeq\frac{d}{dt}\left(\lVert  2^{-L} p_S e^{-r_S t} \hat{\tau}_S \rVert_1\right)^2 =\frac{d}{dt} \left(2^{-2L}|p_S|^2~ e^{-2r_S t}~ \lVert \hat{\tau}_S \rVert_1 ^2\right) =\frac{d}{dt} \left(|p_S|^2~ e^{-2r_S t}~ 2^{-L}\right)  \\
&=-2r_S~|p_S|^2~ e^{-2r_S t} ~ 2^{-L} =-3\gamma~|p_S|^2~ e^{-2r_S t} ~ 2^{-L+1}~ \Gamma
\end{split}
\end{align}

Also, one can view $\hat{\tau}_S$ as specifying a structure of probability currents that flow between eigenstates of the isolated system during the late-time approach to equilibrium.
Specifically, if we define a current $j_{nm}\coloneqq S_{nm} (c_n-c_m)$, then $\frac{d}{dt}\hat{\rho}_{mm}(t) = 3p_S\gamma2^{-L} e^{-r_{S}t}\sum_n j_{nm}$ holds at late times, which implies that $j_{nm}$ obtained from $\hat{\tau}_{S}$ determines the structure of the late time thermalization. One can estimate how $\hat{\rho}(t)$ converges to $\hat{\rho}(\infty)$ with the inner structure of $\hat{\tau}_S$:
\begin{align}
\frac{d}{dt}\left(\lVert \hat{\rho}(t)-\hat{\rho}(\infty) \rVert_1\right)^2 &\simeq \frac{d}{dt} \left(\sum_m |\hat{\rho}_{mm} (t)-1/2^L|^2 \right) = 6\gamma|p_S|^22^{-2L}\sum_m e^{-r_S t} c_n \left(  e^{-r_{S}t}\sum_n j_{nm} \right)\nonumber\\
&=  3\gamma|p_S|^2 2^{-2L} e^{-2r_{S}t}\sum_{n,m}(c_n-c_m) j_{mn} = -3\gamma|p_S|^2 2^{-2L} e^{-2r_{S}t}\sum_{n,m}(c_n-c_m)^2 S_{nm} \\
&= -3\gamma|p_S|^2 2^{-2L} e^{-2r_{S}t}\sum_{n,m}D_{nm}\nonumber
\end{align}
Here, we used the fact that at the latest time, only the diagonal terms in $\hat{\rho}(t)$ remain (under the weak coupling limit $\gamma\ll 1$, the slowest mode consists of only the diagonal terms). $D_{nm}$ quantifies the contribution of a pair $\ket{n}$ and $\ket{m}$ to the relaxation of the slowest mode.

Comparing the two equations above, one can derive the following:
\begin{align}
    \sum_{n,m}D_{nm}=\sum_{n,m}(c_n-c_m)^2S_{nm} = 2^{L+1}\Gamma.
\end{align}
This can also be directly derived by:
\begin{align}
    \sum_{n,m} D_{nm} &= \sum_{n,m} (c_n-c_m)^2 S_{nm} \nonumber \\ &=\sum_{n,m} (c_n^2 + c_m^2) S_{nm} - 2\sum_n c_n \left(\sum_m S_{nm}c_m\right) \\
    &=2^{L+1} - 2^{L+1}(1-\Gamma) =2^{L+1}\Gamma.\nonumber
\end{align}

From the above derivations, the largest contribution $D_{\mu\nu}$ is related with $\Gamma$ by
\begin{align}
    D_{\mu\nu} = R \sum_{n<m}D_{nm} = 2^{L} R \Gamma.
\end{align}
Therefore, one can naively derive the relation between the thermalization rate and the matrix element $S_{\mu\nu}$ between the most important pair of states with an effective O(1) value of $R$ as follows.
\begin{align}
    r_S = 3\gamma \Gamma = \frac{3\gamma(c_\mu - c_\nu)^2}{R}  2^{-L}S_{\mu\nu}
\end{align}
As the observed ratio $R$ and $c_\mu-c_\nu$ are both O(1) values for MBL regimes, we estimate a relation $r_S \sim 2^{-L} S_{\mu\nu}$.

\section{(iv) Detuning of near-resonance by flipping $\hat{\tau}_L$}
\begin{figure*}
\centering
\includegraphics[width=\textwidth]{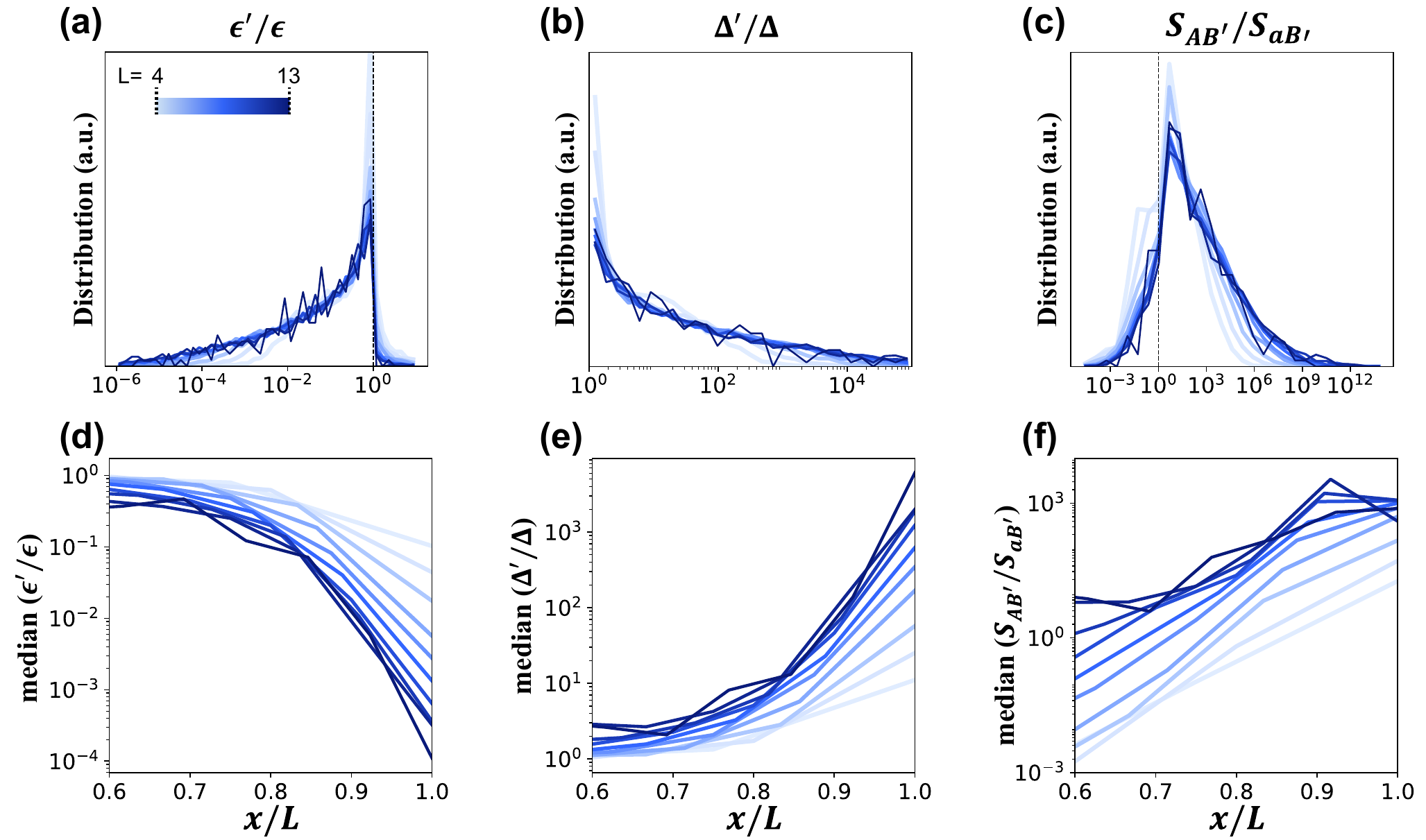}
\caption{Comparing ($A,B$) and ($A',B'$).  All data here are for $\alpha=30$ and $L\in[4,13]$. Distributions of (a) $\epsilon'/\epsilon$, (b) $\Delta'/\Delta$, and (c) $S_{AB'}/S_{aB'}$ show there are many samples satisfying $|\epsilon'|\ll|\epsilon|$ or $|\epsilon|\sim|\epsilon'|\sim|\epsilon-\epsilon'|$, but cases with $|\epsilon-\epsilon'|\ll|\epsilon|$ are less common (see text of the Appendix). The median of (d) $\epsilon'/\epsilon$, (e) $\Delta'/\Delta$, and (f) $S_{AB'}/S_{aB'}$ vs. the location of the last spin-flip (${x}/L$) of the near-resonance are illustrated. The detuning due to flipping $\hat{\tau}_L$ gets more significant as the last spin-flip gets closer to the bath.
}
\label{fig:prime}
\end{figure*}

In this section, we examine to what extent there is also a near-resonance between eigenstates $\ket{A'}$ and $\ket{B'}$.  These two eigenstates differ from the near-resonant states $\ket{A}$ and $\ket{B}$ by the flipping of $\ell$-bit $\hat{\tau}_L$.  This change may strongly detune the near-resonance, which is what happens for samples where the near-resonance involves flipping spins that are close to or include spin $L$. But in other samples, the near-resonance only involves spin flips that are rather far from spin $L$, so flipping  $\hat{\tau}_L$ has a rather small effect on the near-resonance. We denote the location of the spin-flip that is closest to $L$ as ${x}$. 

The minimal wavefunction model introduced in the main text (Eq.~7) neglected the possibility of a near-resonance between ($A', B'$).
Revising our minimal model by including and ``de-mixing'' the possible near-resonance between ($A', B'$), one can write:
\begin{equation}
\label{eq:revised_minimal}
\begin{split}
\ket{A} = \ket{a}-\epsilon \ket{b} &\quad \ket{B} = \ket{b}+\epsilon \ket{a}\\
\ket{A'} = \ket{a'}-\epsilon'\ket{b'} &\quad \ket{B'} = \ket{b'} + \epsilon' \ket{a'}.
\end{split}
\end{equation}
The quasi-energy splittings of these two near-resonances are $\Delta = |\theta_A-\theta_B|$ and $\Delta' = |\theta_{A'}-\theta_{B'}|$; we have labelled them so that $\Delta<\Delta'$.
We can partially quantify how much flipping $\hat{\tau}_L$ alters the near-resonance by comparing $\epsilon'$ and $\Delta'$ to $\epsilon$ and $\Delta$.

The distributions of $\epsilon'/\epsilon$ and $\Delta'/\Delta$ for different system sizes are shown in Fig.~\ref{fig:prime}(a,b). Both distributions show long tails to 
$|\epsilon'|\ll|\epsilon|$ and $\Delta'\gg\Delta$.  This shows that in a substantial fraction of the samples the near-resonance is strongly detuned by flipping $\hat{\tau}_L$. Fig.~\ref{fig:prime}(d,e) shows that this mostly happens due to the near-resonance extending to near site $L$, so $x/L$ is close to or equal to one.

There are also many samples that have  $|\epsilon|$, $|\epsilon'|$, and $|\epsilon-\epsilon'|$ all of the same order, as indicated by the peaks near $\epsilon'/\epsilon\cong\Delta'/\Delta\cong 1$ in Fig.~\ref{fig:prime}(a,b).  These are the cases where flipping $\hat{\tau}_L$ does alter the near-resonance, but not by a particularly large or small amount, which mostly happens for smaller ${x}/L$ [see Fig.~\ref{fig:prime}(d,e)].

The cases where flipping $\hat{\tau}_L$ has very little effect on the near resonance should have $\epsilon'/\epsilon$ and $\Delta'/\Delta$ very close to one, so would make a very sharp peak in those distributions, which is not seen in Fig.~\ref{fig:prime}(a,b).  This suggests that only a small fraction of samples are in this category.  For such samples $S_{AB'}$ is small due to a near cancellation between the contribution from $\ket{a}$ coupling to $\ket{a'}$ and that from $\ket{b}$ coupling to $\ket{b'}$.  This near-cancellation will not happen if we replace $\ket{A}$ with $\ket{a}$ and look instead at $S_{aB'}$. 
Thus in Fig.~\ref{fig:prime}(c) we show the distribution of $S_{AB'}/S_{aB'}$ ($\sim |(\epsilon-\epsilon’)/\epsilon'|^2$).
The much weaker tail in this distribution to very small $S_{AB'}/S_{aB'}$ are the samples where the near-resonance is only very weakly affected by flipping $\hat{\tau}_L$, while the much stronger tail to very large $S_{AB'}/S_{aB'}$ are those more common samples where this flip strongly detunes the near-resonance. 
Fig.~\ref{fig:prime}(f) shows the median $S_{AB'}/S_{aB'}$, with the expected trend of stronger detuning (larger $S_{AB'}/S_{aB'}$) as $x/L$ increases and the near-resonance thus extends closer to the flipped $\hat{\tau}_L$.

\section{(v) Range of Near-Resonance}
\begin{figure}
\centering
\includegraphics[width=0.5\textwidth]{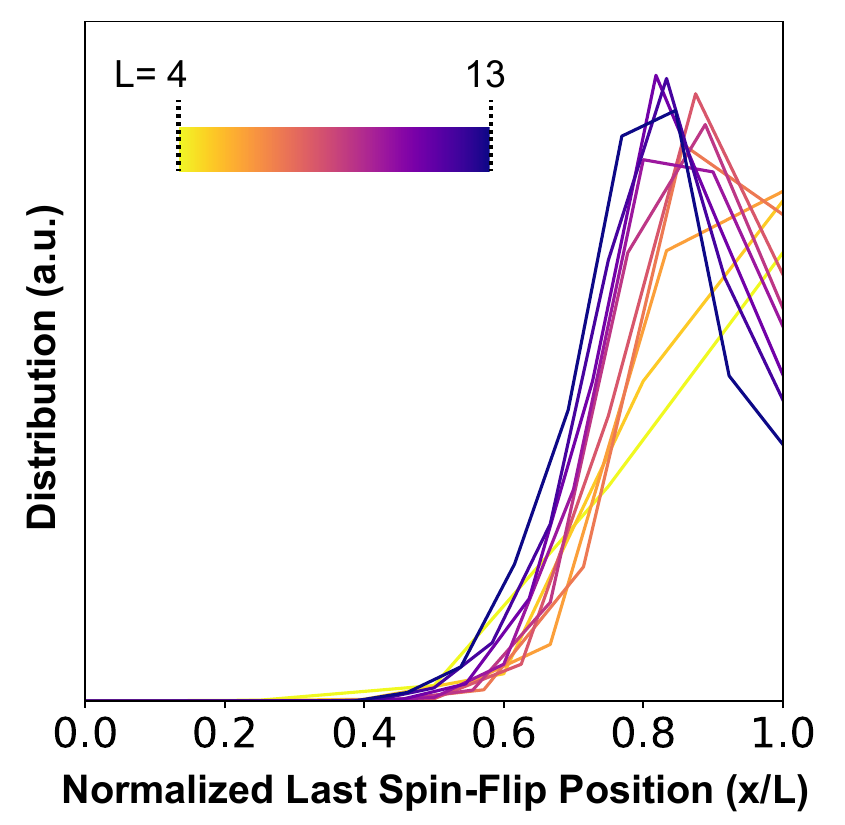}
\caption{
Distribution of the last spin flip location. We normalized the location with system size ($x/L$). All data here are for $\alpha=30$.}

\label{fig:Zn}
\end{figure}

The previous sections described how the bath uses near-resonances to flip the spins at the other end of an MBL chain and thermalize the system. Therefore, it is perhaps natural to expect the near-resonance involved to be an end-to-end resonance such that the last spin-flip between $\ket{A}$ and $\ket{B}$ should not be far away from the bath. Strikingly, however, the distance from the bath to the last spin-flip appears to increase in proportion to the system length $L$.
In particular, we compare the spin polarization on every site for the near-resonant eigenstate pair ($A,B$) and find the location ${x}\leq L$ where the last spin flip occurs. These two eigenstates are directly coupled with the $\hat{Z}_{{x}}$ operator ($|\bra{A}\hat{Z}_{{x}}\ket{B}|^2 \cong 4\epsilon^2$), and in the special case when ${x}=L$, the jump operator from the bath can directly couple the near-resonant pair (which we defined as the $Z$ case).

The normalized position of the last spin flip (${x}/L$) is illustrated in Fig.~\ref{fig:Zn} with different colors for different system sizes ($L\in[4,13]$). 
It is clear that as the system size gets larger, the curve peaks when it is smaller than 1 (the peak is near ${x}/L\approx 0.8 < 1$), indicating that an extensive number of sites are in between the bath and the last spin flip.

As we explained in the previous section, the minimal model including the possible near-resonance between ($A', B'$) predicts the matrix element proportional to $|\epsilon-\epsilon'|^2$.
The value of ${x}$ that yields the largest matrix element is determined by two competing factors: First, as $x$ becomes smaller, the resonance's range becomes shorter and fewer degrees of freedom are involved. Therefore the resonance becomes stronger, with a typically larger $\epsilon$.
However, decreasing $x$ by too much also causes the resonance to decouple from the bath: flipping the $\ell$-bit beside the bath may then produce a near copy of the resonance and result in small $|\epsilon - \epsilon'|$ even though $|\epsilon|$ and $|\epsilon'|$ are large, which will result in that resonance not being useful to the bath in relaxing the slowest mode. In the other direction, as $x$ becomes larger the resonance becomes longer-ranged and thus weaker ($|\epsilon|$ and $|\epsilon|'$ become smaller), but it also couples more strongly to the bath so that $|\epsilon - \epsilon'| \cong |\epsilon|\gg|\epsilon'|$.

Therefore, in most samples for large $L$ the bath uses near-resonances with $x/L<1$. The range of the optimal near-resonance is determined by a balance between the two considerations mentioned above.

\section{(vi) Conclusion: How the avalanche spreads near the avalanche instability in an infinite system}

Starting from a rare region characterized by weak randomness by chance, a local thermal bubble emerges and begins to thermalize adjacent typical MBL spins with a thermalization rate that exponentially decays ($\sim k^{-\ell}$) with the distance ($\ell$) from the rare region ($k>1$). 
As this thermal bubble expands and encompasses $2\ell$ typical spins (due to its spread in both directions along the spin chain), the many-body level spacing within the bubble becomes finer, reduced by a factor of $2^{-2\ell}$.  If this level spacing drops below the thermalization rate, the thermal bubble fails as a bath, and then the avalanche stops.  On the other hand, when the randomness is not sufficiently strong, the avalanche continues and gradually thermalizes the entire system. By comparing the many-body level spacing to the thermalization rate, one can determine the critical value of $k$ to be 4.

The aim of this letter is to investigate the propagation of avalanches near the critical point where $k=4$. We focus on the scenario where a large avalanche is spreading, and explore how this thermal bubble interacts with spins located a large distance ($d\gg1$ as depicted in Fig.~1 of the main text) from the rare region that initiated the avalanche.  In our model of this, we explicitly include the correct dynamics of spins 1 through $L$, with spin $L$ being closer to the rare region.  The next spin, which would be number $(L+1)$, we do not explicitly model.  This spin $(L+1)$ relaxes $4^L$ times faster than spin 1; spin 1 is the slowest mode that we are studying.  So on the time scales that we are studying, spin $(L+1)$ relaxes very rapidly, which makes it a reasonable approximation to model it and all the other thermalized spins between it and the rare region as a Markovian bath.  But spin $(L+1)$ is still very distant from the rare region, so it relaxes very slowly compared to the microscopic scales of our system.  We model this slowness of the bath due to spin $(L+1)$ by coupling the Markovian bath to spin $L$ with an arbtrarily small coupling $\gamma$.  Thus we have an almost closed system of $L$ spins, very weakly coupled to this bath.


By investigating this simplified avalanche model, we find a significant connection between many-body resonances and the dynamics of the spreading of the avalanche. Specifically, we observe that avalanches primarily propagate through the presence of strong rare near-resonances. The dominant processes that spread the avalanche involve only a few pairs of eigenstates of the closed system, including a strong near-resonance. To provide a detailed explanation of this mechanism, we present a minimal model of four eigenstates, incorporating two near-resonant eigenstates and two additional states, and numerically validated our proposed framework.

Our findings suggest a particular description of how avalanches propagate near the critical point: Consider a thermal bubble (an avalanche) that was initiated by a rare region and spread significantly. The spin at distance $\ell$ from the rare region relaxes at a rate $\sim k^{-\ell}$ with $k$ near 4, while the spins closer to the rare region relax faster than that.  The chain of spins between the rare region and spin $\ell$ has time to ``visit'' essentially all of the $2^{(\ell - 1)}$ eigenstates of that segment of the chain, and when it happens upon one of the special eigenstates, this allows the rare near-resonant process that flips spin $\ell$.  Then, as spin $\ell$ flips a few more times, the rest of that chain again visits essentially all of its eigenstates and happens upon one of the rare special states that allow the relaxation of spin $(\ell +1)$.  And this repeats for the next spin, etc., as long as the thermal bubble has a smaller many-body level spacing than the thermalization rate of the next relaxing spin, which will remain true if $k<4$.


\bibliography{supp}